\newcommand{\stavros}[1]{#1}
\newcommand{\ST}[1]{\mbox{}}
\newcommand{\FC}[1]{\mbox{}}
\def\ie{{i.e.},~}
\def\st{{s.t.}~}
\def\Trace{\textit{Tr}}
\def\trace{\textit{tr}}
\def\faulty{\textit{Faulty}}
\def\nonfaulty{\textit{NonFaulty}}
\def\runs{\textit{Runs}}
\def\lang{{\cal L}}
\def\cost{\textit{Cost}} 
\newcommand{\proj}[1]{\boldsymbol{\pi}_{/#1}} 
\newcommand{\setN}{\mathbb N}
\newcommand{\setZ}{\mathbb Z}
\newcommand{\setQ}{\mathbb Q}
\def\calA{{\cal A}}
\def\calB{{\cal B}}
\def\calF{{\cal F}}
\def\calO{{\cal O}}
\def\enabled{\textit{en}}
\newcommand{\out}[1]{{\mathit{Out}}(#1)}
\def\last{\textit{tgt}}
\def\obs{\mbox{\normalfont Obs}\xspace}
\newtheorem{definition}{Definition}
\newtheorem{example}{Example}
\newtheorem{theorem}{Theorem}
\newtheorem{lemma}{Lemma}
\newtheorem{prob}{Problem}
\newtheorem{remark}{Remark}
\def\endef{\ifmmode\squareforged\else{\unskip\nobreak\hfil
\penalty50\hskip1em\null\nobreak\hfil$\blacksquare$
\parfillskip=0pt\finalhyphendemerits=0\endgraf}\fi}
\def\qed{\ifmmode\squareforged\else{\unskip\nobreak\hfil
\penalty50\hskip1em\null\nobreak\hfil$\Box$
\parfillskip=0pt\finalhyphendemerits=0\endgraf}\fi}
\title{\LARGE \bf Fault Diagnosis with Dynamic
  Observers$^\ast$\thanks{$^\ast$Preliminary versions of parts of this paper
    appeared in~\cite{cassez-tase-07} and \cite{cassez-acsd-07}.}}
\author{
Franck Cassez$^\dagger$\thanks{$^\dagger$
  Work suported by the French government under grant ANR-06-SETI.  }
\\
CNRS, IRCCyN Laboratory\\ 1 rue de la Noë \\ BP 92101 \\ 44321 Nantes
Cedex 3 \\ France \\ Email: franck.cassez@cnrs.irccyn.fr.  \and
Stavros Tripakis
\\
Cadence Research Laboratories \\ 2150 Shattuck Avenue, 10th floor \\
Berkeley, CA, 94704 \\ USA \\
and \\ CNRS, Verimag Laboratory \\ Centre Equation \\ 2, avenue de Vignate,
38610 Gi\`eres \\ France \\ Email: tripakis@cadence.com.
}
\begin{document}

\maketitle
\thispagestyle{empty}
\pagestyle{plain}



\begin{abstract}
  In this paper, we review some recent results about the use of
  dynamic observers for fault diagnosis of discrete event systems.
  Fault diagnosis consists in synthesizing a diagnoser that observes a
  given plant and identifies faults in the plant as soon as possible
  after their occurrence. Existing literature on this problem has
  considered the case of fixed static observers, where the set of
  observable events is fixed and does not change during execution of
  the system.  In this paper, we consider dynamic observers: an
  observer can ``switch'' sensors on or off, thus dynamically changing
  the set of events it wishes to observe.  It is known that checking
  diagnosability (\ie whether a given observer is capable of
  identifying faults) can be solved in polynomial time for static
  observers, and we show that the same is true for dynamic ones.  We
  also solve the problem of dynamic observers' synthesis and prove
  that a most permissive observer can be computed in doubly
  exponential time, using a game-theoretic approach.  We further
  investigate optimization problems for dynamic observers and define a
  notion of cost of an observer.
\end{abstract}

\section{Introduction}
\label{sec_intro}
\subsection{Monitoring, Testing, Fault Diagnosis and Control}
Many problems concerning the monitoring, testing, fault diagnosis and
control of discrete event systems (DES) can be formalized  using
finite automata over a set of {\em observable} events $\Sigma$, plus a
set of {\em unobservable}
events~\cite{RamadgeWonham87,Tsitsiklis89}. The invisible actions can
often be represented by a single unobservable event
$\varepsilon$. Given a finite automaton over $\Sigma \cup
\{\varepsilon\}$ which is a model of a {\em plant} (to be monitored,
tested, diagnosed or controlled) and an {\em objective} (good
behaviours, what to test for, faulty behaviours, control objective) we
want to check if a monitor/tester/diagnoser/controller exists that
achieves the objective, and if possible to synthesize one
automatically.

The usual assumption in this setting is that the set of observable
events is fixed (and this in turn, determines the set of unobservable
events as well).  Observing an event usually requires some detection
mechanism, \ie a \emph{sensor} of some sort. Which sensors to use,
how many of them, and where to place them are some of the design
questions that are often difficult to answer, especially without
knowing what these sensors are to be used for.

In this paper we review some recent results about \emph{sensor
  minimization}. These results are interesting since observing an
event can be costly in terms of time or energy: computation time must
be spent to read and process the information provided by the sensor,
and power is required to operate the sensor (as well as perform the
computations).  It is then essential that the sensors used really
provide useful information.  It is also important for the computer to
discard any information given by a sensor that is not really needed.

In the case of a fixed set of observable events, it is not the case
that all sensors \emph{always} provide useful information and
sometimes energy (used for sensor operation and computer treatment) is
spent for nothing. For example, to detect a fault $f$ in the system
described by the automaton $\calB$, Figure~\ref{fig-ex3},
page~\pageref{fig-ex3}, an observer needs to watch only for event $a$
initially, and watch for event $b$ \emph{only after $a$ has
  occurred}. If the sequence $a.b$ occurs, for sure $f$ has occurred
and the observer can raise an alarm.  If, on the other hand,
  event $b$ is not observed after $a$, then $f$ has not occurred. It
  is then not useful to switch on sensor $b$ before observing event
  $a$.

\subsection{Sensor Minimization and Fault Diagnosis}
We focus our attention on sensor minimization, without looking at
problems related to sensor placement, choosing between different types
of sensors, and so on.  We also focus on a particular observation
problem, that of {\em fault diagnosis}. We believe, however, that the
results we obtain are applicable to other contexts as well.

Fault diagnosis consists in observing a plant and detecting whether a
fault has occurred or not. We follow the discrete-event system (DES)
setting of~\cite{Raja95} where the behavior of the plant is known and
a model of it is available as a finite-state automaton over $\Sigma
\cup \{\varepsilon,f\}$ where $\Sigma$ is the set of potentially
observable events, $\varepsilon$ represents the unobservable events,
and $f$ is a special unobservable event that corresponds to the
faults\footnote{Different types of faults could also be considered, by
  having different fault events $f_1,f_2,$ and so on. Our methods can
  be extended in a straightforward way to deal with multiple
  faults. We restrict our presentation to a single fault event for the
  sake of simplicity.  }. Checking {\em diagnosability} (whether a
fault can be detected) for a given plant and a {\em fixed} set of
observable events can be done in polynomial
time~\cite{Raja95,Yoo-01,Jiang-01}.  In the general case, synthesizing
a diagnoser involves determinization and thus cannot be done in
polynomial time.

In this paper, we focus on {\em dynamic} observers.  For results about
sensor optimizition with \emph{static} observers, we refer the reader
to~\cite{cassez-acsd-07}.



In the dynamic observers' framework, we assume that an observer can
decide after each new observation the set of events it is going to
watch.  We first prove that checking diagnosability with dynamic
observers that are given by finite automata can be done in polynomial
time.  As a second aspect, we focus on the \emph{dynamic observer
  synthesis problem}. We show that computing a \emph{dynamic observer}
for a given plant, can be reduced to a \emph{game problem}. We further
investigate optimization problems for dynamic observers and define a
notion of \emph{cost} of an observer. Finally we show how to compute
an optimal (cost-wise) dynamic observer.

\subsection{Related Work}
To our knowledge, the problems of synthesizing dynamic observers for
diagnosability, studied in Section~\ref{sec-dyn-obs}, have not been
addressed previously in the literature.  Consequently, the associated
optimization problems, addressed in section~\ref{sec-opt-pb}, of
computing an optimal observer is also original and new.

\subsection{Organisation of the paper.}
In Section~\ref{sec-prelim} we fix notation and introduce finite
automata with faults to model DES. 

In Section~\ref{sec-dyn-obs} we introduce and study dynamic observers
and show that the most permissive dynamic observer can be computed as
the strategy in a safety 2-player game.

We also define a notion of cost for dynamic observers in
Section~\ref{sec-opt-pb} and show that the cost of a given observer
can be computed using Karp's algorithm.  Finally, we define the
optimal-cost observer synthesis problem and show it can be solved
using Zwick and Paterson's result on graph games.

This paper contains no proofs and the interested reader may refer to
\cite{cassez-tase-07,cassez-acsd-07,sensor-rr-07} for the details.

\section{Preliminaries}
\label{sec-prelim}
\subsection{Words and Languages}

Let $\Sigma$ be a finite alphabet and $\Sigma^\varepsilon= \Sigma \cup
\{\varepsilon\}$. $\Sigma^*$ is the set of finite words over $\Sigma$
and contains $\varepsilon$ which is also the empty word and
$\Sigma^+=\Sigma^* \setminus \{\varepsilon\}$. A \emph{language} $L$
is any subset of $\Sigma^*$.  Given two words $\rho,\rho'$ we denote
$\rho.\rho'$ the concatenation of $\rho$ and $\rho'$ which is defined
in the usual way.  $|\rho|$ stands for the length of the word $\rho$
\stavros{(the length of the empty word is zero)}
and ${|\rho|}_\lambda$ with $\lambda \in \Sigma$ stands for the number
of occurrences of $\lambda$ in $\rho$.
\stavros{We also use the notation $|S|$ to denote the cardinality of a set $S$.}
Given $\Sigma_1 \subseteq \Sigma$, we define the \emph{projection}
\stavros{operator on words,}
$\proj{\Sigma_1} : \Sigma^* \rightarrow \Sigma_1^*$, recursively
as follows:
$\proj{\Sigma_1}(\varepsilon)=\varepsilon$ and for $a \in \Sigma, \rho
\in \Sigma^*$, $\proj{\Sigma_1}(a.\rho)=a.\proj{\Sigma_1}(\rho)$ if $a
\in \Sigma_1$ and $\proj{\Sigma_1}(\rho)$ otherwise.

\subsection{Finite Automata}
%
\begin{definition}[Finite Automaton]\label{def-fa}
  An \emph{automaton} $A$ is a tuple
  $(Q,q_0,\Sigma^{\varepsilon},\delta)$ with $Q$ a set of
  states\footnote{In this paper we often use finite automata that
    generate prefix-closed languages, hence we do not need to use a
    set of final or accepting states.}, $q_0 \in Q$ is the initial
  state, $\delta \subseteq Q \times \Sigma^{\varepsilon} \times 2^Q$
  is the transition relation.  We write $q \xrightarrow{\, \lambda \,}
  q'$ if $q' \in \delta(q,\lambda)$.  For $q \in Q$, $\enabled(q)$ is
  the set of actions enabled at $q$.

  \noindent If $Q$ is finite, $A$ is a \emph{finite automaton}.  An
  automaton is \emph{deterministic} if for any $q \in Q$,
  $|\delta(q,\varepsilon)|=0$ and for any $\lambda \neq \varepsilon$,
  $|\delta(q,\lambda)| \leq 1$.  A \emph{labeled} automaton $A$ is a
  tuple $(Q,q_0,\Sigma,\delta,L)$ where $(Q,q_0,\Sigma,\delta)$ is an
  automaton and $L : Q \rightarrow P$ where $P$ is a finite set of
  \stavros{{\em observations}}. \endef
\end{definition}

A \emph{run} $\rho$ from state $s$ in $A$ is a \stavros{finite or infinite}
sequence of transitions
\[s_0 \xrightarrow{\, \lambda_1 \,} s_1 \xrightarrow{\, \lambda_2 \,}
s_2 \cdots s_{n-1} \xrightarrow{\, \lambda_{n} \,} s_{n} \cdots \] \st
$\lambda_i \in \Sigma^{\varepsilon}$ and $s_0=s$. If $\rho$ is finite
and ends in $s_n$ we let $\last(\rho)=s_n$. The set of finite runs
from $s$ in $A$ is denoted $\runs(s,A)$ and we define
$\runs(A)=\runs(q_0,A)$. The \emph{trace} of the run $\rho$,
de\-no\-ted $\trace(\rho)$, is the word obtained by concatenating the
symbols $\lambda_i$ appearing in $\rho$, for those $\lambda_i$
different from $\varepsilon$.  A word $w$ is \emph{accepted} by $A$ if
$w=\trace(\rho)$ for some $\rho \in \runs(A)$.  The \emph{language}
$\lang(A)$ of $A$ is the set of words accepted by $A$.

\medskip Let $f \not\in \Sigma^\varepsilon$ be a fresh letter that
corresponds to the fault action,
$\Sigma^{\varepsilon,f}=\Sigma^{\varepsilon}\cup \{f\}$ and
$A=(Q,q_0,\Sigma^{\varepsilon,f},\delta)$.  Given  $R \subseteq
\runs(A)$, $\Trace(R)=\{\trace(\rho) \textit{ for $\rho \in R$}\}$ is
the set of traces of the runs in $R$. A run $\rho$ is
\emph{$k$-faulty} if there is some $1 \leq i \leq n$ \st $\lambda_i=f$
and $n-i \geq k$. \stavros{Notice that $\rho$ can be either finite or infinite:
if it is infinite, $n=\infty$ and $n-i\ge k$ always holds.}
 $\faulty_{\geq k}(A)$ is the set of $k$-faulty runs
of $A$. A run is \emph{faulty} if it is $k$-faulty for some $k \in
\setN$ and $\faulty(A)$ denotes the set of faulty runs. It follows
that $\faulty_{\geq k+1}(A) \subseteq \faulty_{\geq k}(A) \subseteq
\cdots \subseteq \faulty_{\geq 0}(A) = \faulty(A)$.  Finally,
$\nonfaulty(A)=\runs(A) \setminus \faulty(A)$ is the set on
\emph{non-faulty} runs of $A$. We let $\faulty^{\textit{tr}}_{\geq
  k}(A)=\Trace(\faulty_{\geq k}(A))$ and
$\nonfaulty^{\textit{tr}}(A)=\Trace(\nonfaulty(A))$ be the sets of
traces of faulty and non-faulty runs.

We assume that each faulty run of $A$ of length $n$ can be extended
into a run of length $n+1$. This is required for technical reasons (in
order to guarantee that the set of faulty runs where sufficient time
has elapsed after the fault is well-defined) and can be achieved by
adding $\varepsilon$ loop-transitions to each deadlock state of $A$.
Notice that this transformation does not change the observations
produced by the plant, thus, any observer synthesized for the
transformed plant also applies to the original one.

\iftrue
\subsection{Product of Automata}
\label{sec-synchro}
\stavros{The product of automata with $\varepsilon$-transitions is defined
in the usual way: the automata} synchronize on common labels except for
$\varepsilon$.  Let
$A_1=(Q_1,q_0^1,\Sigma_1^\varepsilon,\rightarrow_1)$ and
$A_2=(Q_2,q_0^2,\Sigma_2^\varepsilon,$ $\rightarrow_2)$. The
\emph{product} of $A_1$ and $A_2$ is the automaton $A_1 \times
A_2=(Q,q_0,\Sigma,\rightarrow)$ where:
\begin{itemize}
\item $Q=Q_1 \times Q_2$,
\item $q_0=(q_0^1,q_0^2)$,
\item $\Sigma=\Sigma_1 \cup \Sigma_2$,
\item $\rightarrow \subseteq Q \times \Sigma \times Q$ is defined by
  $(q_1,q_2) \xrightarrow{\sigma} (q'_1,q'_2)$ if:
  \begin{itemize}
  \item either $\sigma \in \Sigma_1 \cap \Sigma_2$ and $q_k
    \xrightarrow{\sigma}_k q'_k$, for $k=1,2$,
  \item or $\sigma \in (\Sigma_i \setminus \Sigma_{3-i}) \cup
    \{\varepsilon\}$ and $q_i \xrightarrow{\sigma}_i q'_i$ and
    $q'_{3-i}=q_{3-i}$, \stavros{for $i=1$ or $i=2$}.
  \end{itemize}
\end{itemize}
\fi

\section{Fault Diagnosis with Dynamic Observers}
\label{sec-dyn-obs}


In this section we introduce \emph{dynamic observers}.  They can
choose after each new observation the set of events they are going to
\stavros{watch for}. To illustrate why dynamic observers can be useful
consider the following example.
\begin{example}[Dynamic Observation]
  Assume we want to detect faults in automaton $\calB$ of
  Figure~\ref{fig-ex3}.  A static diagnoser that observes
  $\Sigma=\{a,b\}$ can detect faults. However, no proper subset of
  $\Sigma$ can be used to detect faults in $\calB$. Thus the minimum
  cardinality of the set of observable events for diagnosing $\calB$
  is $2$ \ie a static observer will have to monitor two events during
  the execution of the DES.
  This means that an observer will have to be receptive to at least
  two inputs at each point in time to detect a fault in $\calB$. One
  can think of being receptive as switching on a device to sense an
  event. This consumes energy.  We can be more efficient using a
  dynamic observer, that only turns on sensors when needed, thus
  saving energy. In the case of $\calB$, this can be done as follows:
  in the beginning we only switch on the $a$-sensor; once an $a$
  occurs the $a$-sensor is switched off and the $b$-sensor is switched
  on.  Compared to the previous diagnosers we use half as much energy.
    \begin{figure}[thbtp]
    \centering
    \begin{picture}(70,16)(4,4)
      \gasset{Nframe=n,Nw=2,Nh=2,loopdiam=5,loopangle=0}
      \node[Nmarks=i](a)(10,10){$\bullet$} \node(b)(30,15){$\bullet$}
      \node(c)(50,15){$\bullet$} \node(d)(70,15){$\bullet$}
      \node(dd)(50,5){$\bullet$} \node(e)(30,5){$\bullet$}
      \drawloop(dd){$\varepsilon$}
      \drawloop(d){$\varepsilon$}
      \drawedge(a,b){$f$}
      \drawedge(b,c){$a$}
      \drawedge(c,d){$b$}
      \drawedge[ELside=r,curvedepth=0](a,e){$b$}
      \drawedge[curvedepth=0,ELside=l](e,dd){$a$}
    \end{picture}
    \caption{The automaton $\cal B$}
    \label{fig-ex3}
  \end{figure}
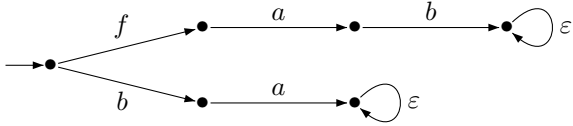
\end{example}

\subsection{Dynamic Observers}
We formalize the above notion of dynamic observation using {\em
  observers}.  The choice of the events to observe can depend on the
choices the observer has made before and on the observations it has
made. Moreover an observer may have \emph{unbounded} memory.

\begin{definition}[Observer]\label{def-observer2}
  An \emph{observer} \obs over $\Sigma$ is a deterministic labeled
  automaton $\obs=(S,s_0,\Sigma,\delta,L)$, where $S$ is a (possibly
  infinite) set of states, $s_0\in S$ is the initial state, $\Sigma$
  is the set of observable events, $\delta : S \times \Sigma
  \rightarrow S$ is the transition function (a total function), and $L
  : S \rightarrow 2^\Sigma$ is a labeling function that specifies the
  set of events that the observer wishes to observe when it is at
  state $s$.  We require for any state $s$ and any $a\in\Sigma$, if
  $a\not\in L(s)$ then $\delta(s,a)=s$: this means the observer does
  not change its state when an event it has chosen not to observe
  occurs.   \endef
\end{definition}
As an observer is deterministic we use the notation $\delta(s_0,w)$ to
denote the state $s$ reached after reading the word $w$ and
$L(\delta(s_0,w))$ is the set of events $\obs$ observes after $w$.
\medskip

\noindent An observer implicitly defines a {\em transducer} that
consumes an input event $a\in\Sigma$ and, depending on the current
state $s$, either outputs $a$ (when $a\in L(s)$) and moves to a new
state $\delta(s,a)$, or outputs $\varepsilon$, (when $a\not\in L(s)$)
and remains in the same state waiting for a new event. Thus, an
observer defines a mapping \obs from $\Sigma^*$ to $\Sigma^*$ (we use
the same name ``Obs'' for the automaton and the mapping). Given a run
$\rho$, $\obs(\proj{\Sigma}(\trace(\rho)))$ is the output of the
transducer on $\rho$.  It is called the \emph{observation} of $\rho$
by Obs.  We next provide an example of a particular case of observer
which can be represented by a finite-state machine.

\begin{example}
\begin{figure}[hbtp]
    \centering
    \begin{picture}(50,16)(10,-4)
       \gasset{Nframe=y,Nadjust=wh,Nadjustdist=2,loopdiam=5,loopangle=0}
      \node[Nmarks=i](a)(10,5){$0$}
      \put(5,-4){$L(0)=\{a\}$}
      \node(b)(30,5){$1$}
      \put(25,-4){$L(1)=\{b\}$}
      \node(c)(50,5){$2$}
      \put(45,-4){$L(2)=\emptyset$}
      \drawedge(a,b){$a$}
      \drawloop[loopangle=90](a){$b$}
      \drawedge(b,c){$b$}
      \drawloop[loopangle=90](b){$a$}
      \drawloop[loopangle=90](c){$a$}
      \drawloop[loopangle=0](c){$b$}
    \end{picture}
    \caption{A finite-state observer Obs}
    \label{fig-mealy2}
\end{figure}
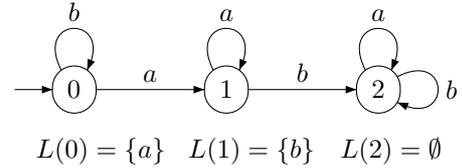
Let Obs be the observer of Figure~\ref{fig-mealy2}.  Obs maps the
following inputs as follows: $\text{Obs}(baab)=ab$,
$\text{Obs}(bababbaab)=ab$, $\text{Obs}(bbbbba)=a$ and
$\text{Obs}(bbaaa)=a$.  If $\text{Obs}$ operates on the DES $\calB$ of
Figure~\ref{fig-ex3} and $\calB$ generates $f.a.b$, $\text{Obs}$ will
have as input $\proj{\Sigma}(f.a.b)=a.b$ with
$\Sigma=\{a,b\}$. Consequently the observation of $\text{Obs}$ is
$\text{Obs}(\proj{\Sigma}(f.a.b))=a.b$.
\end{example}

\subsection{Fault Diagnosis with Dynamic Diagnosers}
\label{sec-obsk-diag}
\begin{definition}[$(\obs,k)$-diagnoser] \label{def-obsk-diag} Let $A$
  be a finite automaton over $\Sigma^{\varepsilon,f}$ and \obs be an
  observer over $\Sigma$. $D:\Sigma^* \rightarrow \{0,1\}$ is an
  \emph{$(\obs,k)$-diagnoser} for $A$ if 
  \begin{itemize}
  \item $\forall \rho \in \nonfaulty(A)$,
    $D(\obs(\proj{\Sigma}(\trace(\rho))))=0$ and 
  \item  $\forall \rho \in \faulty_{\geq k}(A)$,
    $D(\obs(\proj{\Sigma}(\trace(\rho))))=1$.     \endef
  \end{itemize}
\end{definition}
$A$ is $(\obs,k)$-diagnosable if there is an $(\obs,k)$-diagnoser for
$A$. $A$ is Obs-diagnosable if there is some $k$ such that $A$ is
$(\obs,k)$-diagnosable.

If a diagnoser always selects $\Sigma$ as the set of observable
events, it is a static observer and $(\obs,k)$-diagnosability amounts
to the standard $(\Sigma,k)$-diagnosis problem~\cite{Raja95}. 

As for $\Sigma$-diagnosability, we have the following equivalence for
dynamic observers: $A$ is $(\obs,k)$-diagnosable iff
\[\obs(\proj{\Sigma}(\faulty^{\textit{tr}}_{\geq k}(A))) \cap
\obs(\proj{\Sigma}(\nonfaulty^{\textit{tr}}(A))) = \emptyset \mathpunct.
\] 

\begin{prob}[Finite-State Obs-Diagnosability] \label{prob-mealy-diag} \mbox{} \\
  \textsc{Input:} $A$, \obs a finite-state observer.\\
  \textsc{Problem:} 
    \begin{enumerate}[(A)]
    \item Is $A$ \obs-diagnosable?
    \item If the answer to (A) is ``yes'', compute the minimum
      $k$ such that $A$ is $(\obs,k)$-diagnosable.
    \end{enumerate}
 \end{prob}  

 \begin{theorem}\label{thm-3}
   Problem~\ref{prob-mealy-diag} is in P.
 \end{theorem}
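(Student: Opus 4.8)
The plan is to reduce Problem~\ref{prob-mealy-diag} to reachability and cycle detection on a finite synchronized product, using the characterization stated just above the problem: $A$ is $(\obs,k)$-diagnosable iff $\obs(\proj{\Sigma}(\faulty^{\textit{tr}}_{\geq k}(A)))$ and $\obs(\proj{\Sigma}(\nonfaulty^{\textit{tr}}(A)))$ are disjoint, so $A$ is $\obs$-diagnosable iff this holds for some $k$. Hence non-diagnosability means there exist a faulty run with arbitrarily large delay after the fault and a non-faulty run that \obs maps to the \emph{same} observation. First I would build a finite automaton $G$ that runs two copies $A_1,A_2$ of the plant together with a \emph{single} copy of the observer \obs. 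A state of $G$ is a tuple $(q_1,q_2,s,b)$, where $q_1,q_2$ are plant states, $s$ is an observer state, and $b\in\{0,1\}$ records whether the first copy has already executed $f$; the second copy is forbidden from taking $f$, so it ranges exactly over non-faulty runs. The transitions of $G$ encode ``same observation'': an $\varepsilon$- or $f$-move, or an $a$-move with $a\notin L(s)$, is taken by one copy alone, leaving $s$ unchanged (it is unobserved); an $a$-move with $a\in L(s)$ must be taken \emph{synchronously} by both copies on the same letter $a$, and then the shared observer advances to $\delta(s,a)$.

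For correctness I would argue soundness and completeness of this shared-observer synchronization. Since \obs is deterministic and its state is driven only by the events it actually observes, two runs yield the same observation iff they emit the same word of observed events in the same order; this is exactly what the synchronized observed moves enforce while $s$ tracks the common output. Thus every run of $G$ represents an observation-matched pair (a faulty copy-$1$ run against a non-faulty copy-$2$ run), and conversely every such confused pair is realized by a run of $G$. Let $C$ be the set of \emph{confused} states, i.e.\ those with $b=1$ (copy $1$ has faulted) and copy $2$ non-faulty. Because the plant has no deadlocks (every faulty run is extendable by the standing assumption), a faulty run of unbounded delay corresponds to copy $1$ taking infinitely many transitions while the pair stays confused; by finiteness of $G$ this is equivalent to a \emph{reachable cycle inside $C$ containing at least one copy-$1$ transition}. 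Therefore $A$ is \emph{not} $\obs$-diagnosable iff $G$ has such a reachable confused cycle. As $G$ has $O(|Q|^2\,|S|)$ states and its transition relation is built in polynomial time, this is decided by a standard reachability and strongly-connected-components computation, settling part~(A).

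For part~(B), assume $A$ is $\obs$-diagnosable, so no cycle inside $C$ carries a copy-$1$ transition. I would weight each edge of $G$ by $1$ if it moves copy $1$ while $b=1$ (an increment of the post-fault delay) and by $0$ otherwise. The maximum delay $d$ of a confused faulty run is then the maximum weight of a path lying in $C$. Since all positive-weight cycles are absent from $C$, contracting the zero-weight SCCs yields a DAG on which the longest weighted path, and hence $d$, is computed in polynomial time; the minimum $k$ with $A$ being $(\obs,k)$-diagnosable is then $d+1$ (a delay of at least $d+1$ can no longer be confused). Combining both parts keeps every step polynomial, which proves the theorem.

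The main obstacle I expect is the correctness of collapsing both runs onto one observer state: I must show that using a single copy of \obs loses nothing, which rests squarely on \obs being deterministic and output-driven, and that ``unbounded delay'' matches a cycle \emph{with a copy-$1$ move} rather than an arbitrary cycle---cycles built solely from copy-$2$ moves or from unobserved moves must be excluded from the delay count, which is precisely why the $0/1$ edge weighting is needed. Finally, the polynomial bound hinges on \obs being \emph{finite-state}: for a general unbounded-memory observer the product $G$ would be infinite, and the argument would not yield membership in P.
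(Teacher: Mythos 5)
Your proof is correct, and in substance it coincides with the paper's, differing mainly in packaging. The paper builds a product $A \otimes \obs$ in which a \emph{single} copy of the plant is synchronized with the observer and every event $\lambda \notin L(s)$ is relabelled to $\varepsilon$, proves a trace correspondence (each faulty/non-faulty run of $A$ matches a faulty/non-faulty trace of $A\otimes\obs$ with identical observation, and conversely), and then invokes the known polynomial-time check for \emph{static} $(\Sigma,k)$-diagnosability of the product --- which is itself a twin-plant cycle-detection procedure. Your automaton $G$ is exactly what one obtains by composing these two steps: in the twin plant of $A\otimes\obs$ the two observer components always agree, since both start at $s_0$, are updated only on the synchronized observed events, and are frozen on unobserved ones by the requirement $\delta(s,a)=s$ for $a\notin L(s)$; hence keeping one shared observer state, as you do, loses nothing. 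What the paper's modular route buys is reuse of the literature for both parts (A) and (B), since its reduction preserves $k$ exactly; what your self-contained route buys is an explicit treatment of the points the paper leaves to its references, and you get the two delicate ones right: confusion with unbounded delay must be witnessed by a reachable cycle in the $b=1$ region \emph{containing a copy-1 move} (cycles made only of copy-2 or unobserved moves do not increase the post-fault delay), and the minimum $k$ is one plus the longest $0/1$-weighted path after contracting the zero-weight strongly connected components. Two cosmetic caveats only: your weighting must give the $f$-edge itself (source $b=0$) weight $0$, so that the count matches the paper's definition of $k$-faulty runs, and the degenerate case where no fault is reachable (then $A$ is $(\obs,0)$-diagnosable and $d$ is undefined) should be handled separately; neither affects soundness or the polynomial bound.
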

To prove Theorem~\ref{thm-3} we build 
a \emph{product} automaton\footnote{We use $\otimes$ to clearly
  distinguish this product from the usual synchronous product
  $\times$.} $A \otimes \obs$ such that: $A$ is $(\obs,k)$-diagnosable
$\iff$ $A \otimes \obs$ is $(\Sigma,k)$-diagnosable.
Given two finite automata
$A=(Q,q_0,\Sigma^{\varepsilon,f},\rightarrow)$ and
$\obs=(S,s_0,\Sigma,\delta,L)$,
the automaton $A \otimes \obs=(Q\times
S,(q_0,s_0),\Sigma^{\varepsilon,f},\rightarrow)$ is defined as
follows:
\begin{itemize}
\item $(q,s) \xrightarrow{\,\beta\,} (q',s')$ iff 
  $\exists \lambda \in \Sigma$ \st $q \xrightarrow{\,\lambda\,} q'$,
  $s'=\delta(s,\lambda)$ and $\beta=\lambda$ if $\lambda \in L(s)$,
  $\beta=\varepsilon$ otherwise;
\item $(q,s) \xrightarrow{\,\lambda\,} (q',s)$ iff 
  $\exists \lambda \in \{\varepsilon,f\}$ \st $q
  \xrightarrow{\,\lambda\,} q'$.
\end{itemize}
The number of states of $A \otimes \obs$ is at most $|Q| \times |S|$
and the number of transitions is bounded by the number of transitions
of $A$. Hence the size of the product is polynomial in the size of the
input $|A|+|\obs|$. Checking that $A \otimes \obs$ is diagnosable can
be done in polynomial time and 
Problem~\ref{prob-mealy-diag}.(A) is in P.

\begin{example}
  Let $\calB$ be the DES given in Figure~\ref{fig-ex3} and $\obs$ the
  observer of Figure~\ref{fig-mealy2}. The product $\calA \otimes \obs$
  \stavros{used in the above proof} is given in Figure~\ref{fig-prod}.
\end{example}
  \begin{figure}[hbtp]
    \centering
    \begin{picture}(75,18)(4,0)
      \gasset{Nframe=n,Nw=2,Nh=2,loopdiam=5,loopangle=0}
      \node[Nmarks=i](a)(10,10){$\bullet$} \node(b)(30,15){$\bullet$}
      \node(c)(50,15){$\bullet$} \node(d)(70,15){$\bullet$}
      \node(dd)(50,5){$\bullet$} \node(e)(30,5){$\bullet$} 
      \drawloop(dd){$\varepsilon$}
      \drawloop(d){$\varepsilon$}
      \drawedge(a,b){$f$}
      \drawedge(b,c){$a$}
      \drawedge(c,d){$b$}
      \drawedge[ELside=r,curvedepth=0](a,e){$\varepsilon$}
      \drawedge[curvedepth=0,ELside=l](e,dd){$a$}
    \end{picture}
    \caption{The product $\calA \otimes \obs$
    }
    \label{fig-prod}
  \end{figure}

  For Problem~\ref{prob-mealy-diag}, we have assumed that an observer
  was given.  It would be even better if we could \emph{synthesize} an
  observer \obs such that the plant \stavros{is \obs-diagnosable}.  Before
  attempting to synthesize such an observer, we should first check
  that the plant is $\Sigma$-diagnosable: if it is not, then obviously
  no such observer exists; if the plant is $\Sigma$-diagnosable, then
  the trivial observer that observes all events in $\Sigma$ at all
  times works\footnote{ Notice that this also shows that existence of
    an observer implies existence of a finite-state observer, since
    the trivial observer is finite-state.  }. 
\stavros{
As a first step towards synthesizing non-trivial observers, we can
attempt to compute the set of {\em all} valid observers, which includes
the trivial one but also non-trivial ones (if they exist).
}

 \begin{prob}[Dynamic-Diagnosability] \label{prob-all-obs} \mbox{} \\
   \textsc{Input:} $A$.\\
   \textsc{Problem:} Compute the set $\calO$ of all observers such
   that $A$ is \obs-diagnosable iff $\obs \in \calO$.
 \end{prob}  
 We do not have a solution to the above general problem.
 Instead, we introduce a restricted variant:

 \begin{prob}[Dynamic-$k$-Diagnosability] \label{prob-all-kobs} \mbox{} \\
   \textsc{Input:} $A$, $k \in \setN$.\\
   \textsc{Problem:} Compute the set $\calO$ of all observers such
   that $A$ is $(\obs,k)$-diagnosable iff $\obs \in \calO$.
 \end{prob}  

\subsection{Problem~\ref{prob-all-kobs} as a Game Problem}
\label{sec-game}
To solve Problem~\ref{prob-all-kobs} we reduce it to a \emph{safety}
2-player game. 
%
In short, the reduction we propose is the following:
\begin{itemize}
\item Player~1 chooses the set of events it wishes
to observe, then it hands over to Player~2;
\item Player~2 chooses an event and tries to produce a run which is
  the observation of a $k$-faulty run and a non-faulty run.
\end{itemize}
Player~2 wins if he can produce such a run. Otherwise Player~1 wins.
Player~2 has complete information of Player~1's moves (i.e., it can
observe the sets that Player~1 chooses to observe).
Player~1, on the other hand, only has partial information of Player~2's
moves because not all events are observable (details follow).
Let $A=(Q,q_0,\Sigma^{\varepsilon,f},\rightarrow)$ be a finite automaton.
To define the game, we use two copies of automaton $A$: $A_1^k$ and
$A_2$. 
The accepting states of $A_1^k$ are those corresponding to runs of $A$
which are faulty and where more than $k$ steps occurred after the
fault.  $A_2$ is a copy of $A$ where the $f$-transitions have been
removed.  The game we are going to play is the following (see
Figure~\ref{fig-game}, Player~1 states are depicted with square boxes
and Player~2 states with round shapes):
 \begin{enumerate}
 \item the game starts in an state $(q_1,q_2)$ corresponding to the
   initial state of the product of $A_1^k$ and $A_2$. 
   Initially, it is Player~1's turn to play. Player~1 chooses a set of
   events he is going to observe \ie a subset $X$ of $\Sigma$ and
   hands it over to Player~2;
\item assume the automata $A_1^k$ and $A_2$ are in states $(q_1,q_2)$.
  Player~2 can change the state of $A_1^k$ and $A_2$ by:
  \begin{enumerate}
  \item firing an action (like $\lambda_1, \lambda_2, \lambda_3,
    \lambda_4$ in Figure~\ref{fig-game}) which is not in $X$ in either
    $A_1^k$ or $A_2$ (no synchronization).  In this case a new state
    $(q,q')$ is reached and Player~2 can play again from this state;
  \item firing an action in $X$ (like $\sigma_1,\sigma_2$
    in Figure~\ref{fig-game}): to do this both $A_1^k$ and $A_2$ must be
    in a state where $\lambda$ is possible (synchronization); after
    the action is fired a new state $(q_1',q_2')$ is reached: now it
    is Player~1's turn to play, and the game continues as in step~1
    above from the new state $(q_1',q_2')$.
  \end{enumerate}
 \end{enumerate}
 Player~2 wins if he can reach a state $(q_1,q_2)$ in $A_1^k
  \times A_2$ where $q_1$ is an accepting state of $A_1^k$ (this means
  that Player~1 wins if it can avoid ad infinitum this set of states).
  In this sense this is a safety game for Player~1 (and a reachability
  game for Player~2).
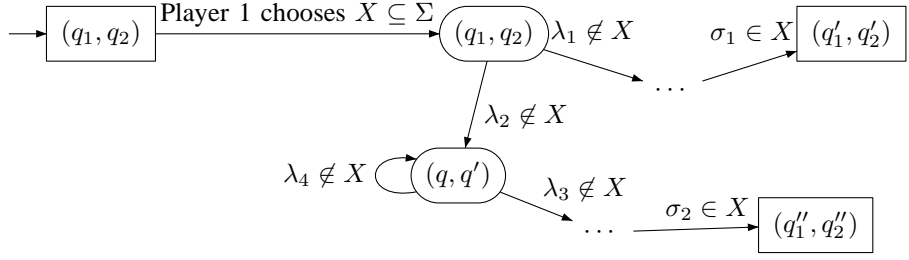
\begin{figure*}[hbtp]
    \centering
    \unitlength=.94mm
    \begin{picture}(125,43)(-22,-24)
      \gasset{Nframe=y,Nadjust=wh,Nadjustdist=2,loopdiam=5,loopangle=0}
      \node[Nmarks=i,Nmr=0,Nadjust=wh](a)(-10,10){$(q_1,q_2)$} 
      \node(b)(45,10){$(q_1,q_2)$}
      \node[Nframe=n](c)(70,2){$\cdots$}
      \node(c1)(40,-10){$(q,q')$}
      \node[Nframe=n](c2)(60,-18){$\cdots$}
      \node[Nmr=0,Nadjust=wh](d)(95,10){$(q'_1,q'_2)$} 
      \node[Nmr=0,Nadjust=wh](e)(90,-17){$(q''_1,q''_2)$} 
      \drawedge(a,b){Player~1 chooses $X \subseteq \Sigma$}
      \drawedge(b,c){$\lambda_1 \not\in X$}
      \drawedge[ELpos=50](c,d){$\sigma_1 \in X$}
      \drawedge(b,c1){$\lambda_2 \not\in X$}
      \drawedge[ELpos=80](c1,c2){$\lambda_3 \not\in X$}
      \drawedge(c2,e){$\sigma_2 \in X$}

      \drawloop[loopangle=180](c1){$\lambda_4 \not\in X$}
       \end{picture}
    \caption{Game reduction for problem~\ref{prob-all-kobs}}
    \label{fig-game}
\end{figure*}
Formally, the game $G_A=(S_1\uplus S_2,s_0,\Sigma_1 \uplus
\Sigma_2,\delta)$ is defined as follows ($\uplus$ denotes union of
disjoint sets):
\begin{itemize}
\item $S_1= (Q\times \{-1,\cdots,k\}) \times Q$ is the set of Player~1
  states; a state $((q_1,j),q_2) \in S_1$ indicates that $A_1^k$ is in
  state $q_1$, $j$ steps have occurred after a fault, and $q_2$ is the
  current state of $A_2$. If no fault has occurred, $j=-1$ and if more
  than $k$ steps occurred after the fault, we use $j=k$.
\item $S_2= (Q\times \{-1,\cdots,k\}) \times Q \times 2^{\Sigma}$ is
  the set of Player~2 states. For a state $((q_1,j),q_2,X) \in S_2$,
  the triple $((q_1,j),q_2)$ has the same meaning as for $S_1$, and $X$
  is the set of moves Player~1 has chosen to observe on its last move.
\item $s_0=((q_0,-1),q_0)$ is the initial state of the game belonging
  to Player~1;
\item $\Sigma_1=2^{\Sigma}$ is the set of moves of Player~1;
  $\Sigma_2=\Sigma^\varepsilon$ is the set of moves of Player~2 (as we
  encode the fault into the state, we do not need to distinguish $f$
  from $\varepsilon$).
\item the transition relation $\delta \subseteq (S_1 \times \Sigma_1
  \times S_2) \cup (S_2 \times \{\varepsilon\} \times S_2) \cup (S_2
  \times \Sigma \times S_1)$ is defined by:
  \begin{itemize}
  \item Player~1 moves: let $\sigma \in \Sigma_1$
    and $s_1 \in S_1$. Then $(s_1, \sigma, (s_1,\sigma)) \in \delta$.
  \item Player~2 moves: a move of Player~2 is either a silent move
    ($\varepsilon$) \ie a move of $A_1^k$ or $A_2$ or a joint move of
    $A_1^k$ and $A_2$ with an observable action in $X$.  Consequently,
    a \emph{silent} move
    $((q_1,i),q_2,X),\varepsilon,(q'_1,j),q'_2,X)) $ is in $\delta$ if
    one of the following conditions holds:
    \begin{enumerate}
    \item either $q'_2=q_2$, $q_1 \xrightarrow{\,\ell\,} q'_1$ is a
      step of $A_1^k$, $\ell \not\in X$, and if $i \geq 0$ then
      $j=\min(i+1,k)$; if $i=-1$ and $\ell=f$ $j=0$ otherwise $j=i$.
    \item either $q'_1=q_1$, $q_2 \xrightarrow{\,\ell\,} q'_2$ is a
      step of $A_2$, $\ell \not\in X$ (and $\ell \neq f$), and if $i
      \geq 0$ then $j=\min(i+1,k)$, otherwise $j=i$.
    \end{enumerate}
    A \emph{visible} move can be taken by Player~2 if both $A_1^k$ and
    $A_2$ agree on doing such a move. In this case the game proceeds
    to a Player~1 state: $((q_1,i),q_2,X),\ell,((q'_1,j),q'_2)) \in
    \delta$ if $\ell \in X$, $q_1 \xrightarrow{\,\ell\,} q'_1$ is a
    step of $A_1^k$, $q_2 \xrightarrow{\,\ell\,} q'_2$ is a step of
    $A_2$, and if $i \geq 0$ then $j=\min(i+1,k)$, otherwise $j=i$.
  \end{itemize}
\end{itemize}
We can show that for any observer $O$ \st $A$ is
$(O,k)$-dia\-gnosable, there is a strategy $f(O)$ for Player~1 in
$G_A$ \st $f(O)$ is \emph{trace-based} and winning.  A \emph{strategy}
for Player~1 is a mapping $f: \runs(G_A) \rightarrow \Sigma_1$ that
associates a move $f(\rho)$ in $\Sigma_1$ to each run $\rho$ in $G_A$
that ends in an $S_1$-state. A strategy $f$ is trace-based if given
two runs $\rho,\rho'$, if $\trace(\rho)=\trace(\rho')$ then
$f(\rho)=f(\rho')$.
Conversely, for any trace-based winning strategy $f$ (for Player~1),
we can build an observer $O(f)$ \st $A$ is $(O(f),k)$-diagnosable.

Let $O=(S,s_0,\Sigma,\delta,L)$ be an observer for
$A$. 
We define the strategy $f(O)$ on finite runs of $G_A$ ending in a
Player~1 state by: $f(O)(\rho)
=L(\delta(s_0,\proj{\Sigma}(\trace(\rho))))$.
The intuition is that we take the run $\rho$ in $G_A$, take the trace
of $\rho$ (choices of Player~1 and moves of Player~2) and remove the
choices of Player~1. This gives a word in $\Sigma^*$. The strategy for
Player~1 for $\rho$ is the set of events the observer $O$ chooses to
observe after reading $\proj{\Sigma}(\trace(\rho))$ \ie
$L(\delta(s_0,\proj{\Sigma}(\trace(\rho))))$.

\noindent Conversely, with each trace-based strategy $f$ of the game
$G_A$ we can associate an automaton $O(f)=(S,s_0,\Sigma,
\delta,L)$ defined by:
\begin{itemize}
\item $S= \{\proj{\Sigma}(\trace(\rho)) \,|\, \rho \in \out{G_A,f}
  \textit{ and } \last(\rho) \in S_1 \}$;
\item $s_0=\varepsilon$;
\item $\delta(v,\ell)= v'$ if $v \in S$, $v'=v.\ell$ and there is a
  run $\rho \in \out{G_A,f}$ with $\rho=q_0 \xrightarrow{X_0} q_0^1
  \xrightarrow{\varepsilon^*} q_0^{n_0} \xrightarrow{\lambda_1} q_1
  \xrightarrow{X_1} q_1^1 \xrightarrow{\varepsilon^*} q_1^{n_1}
  \xrightarrow{\lambda_2} q_2 \cdots q_{k_1}
  \xrightarrow{\varepsilon^*} q_{k-1}^{n_{k-1}}
  \xrightarrow{\lambda_k} q_k$ with each $q_i \in S_1$, $q_i^j \in
  S_2$, $v=\proj{\Sigma}(\trace(\rho))$, and $\rho
  \xrightarrow{X_k} q_{k}^1 \xrightarrow{\varepsilon^*} q_{k}^{n_{k}}
  \xrightarrow{\ell} q_{k+1}$ with $q_{k+1} \in S_1$, $\ell \in X_k$.
  \\
  $\delta(v,l)=v$ if $v \in S$ and $\ell \not\in f(\rho)$;
\item $L(v)=f(\rho)$ if $v=\proj{\Sigma}(\trace(\rho))$.
\end{itemize}
Using the previous definitions and constructions we obtain the
following theorems:
\begin{theorem}\label{red-1}
  Let $O$ be an observer \st $A$ is $(O,k)$-dia\-gno\-sa\-ble. Then
  $f(O)$ is a trace-based winning strategy in $G_A$.
\end{theorem}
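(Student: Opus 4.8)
The plan is to treat the two assertions separately. That $f(O)$ is trace-based is immediate: by its definition $f(O)(\rho)=L(\delta(s_0,\proj{\Sigma}(\trace(\rho))))$ depends on $\rho$ only through $\proj{\Sigma}(\trace(\rho))$, so $\trace(\rho)=\trace(\rho')$ forces $f(O)(\rho)=f(O)(\rho')$. The substance is in proving $f(O)$ winning, which I would do by contradiction.

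So suppose $f(O)$ is not winning. Then some outcome $\rho\in\out{G_A,f(O)}$ reaches a Player-1 losing state, \ie a state whose counter component equals $k$. First I would write $\rho$ as an alternation of Player-1 choices $X_0,X_1,\dots,X_n$ with intervening blocks of Player-2 moves, each block a run of silent $\varepsilon$-moves terminated by one visible action of the current $X_i$ on which both copies synchronize. The idea is then to read off the two automaton coordinates of $\rho$: a run $\nu$ in $A_1^k$ and a run $\nu'$ in $A_2$.

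By definition of $\delta$, every Player-2 move is either such a visible synchronized action, or a silent move of exactly one copy with the other frozen. I would therefore delete from $\nu$ the $\varepsilon$-steps that actually came from an $A_2$-move, and symmetrically delete from $\nu'$ the $\varepsilon$-steps coming from $A_1^k$, obtaining genuine runs $\tilde\nu,\tilde\nu'\in\runs(A)$. Three facts then remain. First, $\tilde\nu\in\faulty_{\geq k}(A)$: the counter reaching $k$ forces, via the arithmetic $j=\min(i+1,k)$ and the reset at the fault, a fault followed by at least $k$ steps in the $A_1^k$-copy. Second, $\tilde\nu'\in\nonfaulty(A)$, since $A_2$ has all $f$-transitions removed. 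Third, and crucially, $O(\proj{\Sigma}(\trace(\tilde\nu)))=O(\proj{\Sigma}(\trace(\tilde\nu')))$: the only letters feeding the observation are the visible synchronized actions, and because $f(O)$ sets each $X_i=L(\delta(s_0,v_i))$ with $v_i$ the observation accumulated so far, both copies emit exactly the letters of $X_i$ in each block and $O$ maps the two prefixes identically. This produces a word in $O(\proj{\Sigma}(\faulty^{\textit{tr}}_{\geq k}(A)))\cap O(\proj{\Sigma}(\nonfaulty^{\textit{tr}}(A)))$, contradicting the characterization of $(O,k)$-diagnosability. Hence $f(O)$ is winning.

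The hard part will be the bookkeeping of the $\varepsilon$-deletions: matching each discarded silent step to a frozen-copy move, checking that $\tilde\nu$ and $\tilde\nu'$ are still legitimate runs of $A$, and verifying block by block that their observations under $O$ agree. The counter arithmetic that certifies $\tilde\nu$ as precisely $\geq k$-faulty is then routine.
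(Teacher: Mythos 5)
Your proposal is correct and follows essentially the same route as the paper's own proof: trace-basedness immediately from the definition of $f(O)$, then a contradiction argument that takes a losing outcome in $\out{G_A,f(O)}$, decomposes it into Player~1 choices $X_i$ and blocks of Player~2 moves, projects onto the two copies, deletes the $\varepsilon$-steps coming from the frozen copy to obtain $\tilde{\nu}\in\faulty_{\geq k}(A)$ and $\tilde{\nu}'\in\nonfaulty(A)$, and observes that in each block both runs yield the same observation (exactly the synchronized letters in $X_i$), contradicting $(O,k)$-diagnosability. No changes needed.
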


\begin{theorem}\label{red-2}
  Let $f$ be a trace-based winning strategy in $G_A$. Then $O(f)$ is
  an observer and $A$ is $(O(f),k)$-diagnosable.
\end{theorem}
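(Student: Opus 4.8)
The plan is to prove the two assertions separately: first that $O(f)$ meets Definition~\ref{def-observer2}, then that $A$ is $(O(f),k)$-diagnosable; the second part is the mirror image of the (omitted) argument for Theorem~\ref{red-1}.

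\textbf{$O(f)$ is an observer.} The only non-trivial points are that the labelling $L$ is well defined and that the self-loop condition ``$\ell\notin L(v)\Rightarrow\delta(v,\ell)=v$'' holds. The latter is immediate from the second clause of the definition of $\delta$. For the former, a state $v\in S$ may be written $v=\proj{\Sigma}(\trace(\rho))$ for several $\rho\in\out{G_A,f}$ ending in an $S_1$-state, and I must show $f(\rho)$ does not depend on the choice of $\rho$. The key observation is that along any $f$-outcome the sequence of Player~1 choices $X_0,X_1,\dots$ is a function of the visible letters played so far: inductively, $X_0=f(s_0)$ is fixed, and once the visible prefix $\ell_1\cdots\ell_i$ is known the trace of the corresponding $S_1$-prefix is determined (the $X_j$ are already fixed and the $\varepsilon$-moves drop out of the trace), so trace-basedness of $f$ fixes $X_i$. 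Hence two $f$-outcomes with the same $\Sigma$-projection $v$ have identical full traces and therefore the same $f$-value, so $L(v)$ is well defined. The same observation shows that the word the transducer $O(f)$ outputs on an input is exactly the $\Sigma$-projection of the visible Player~2 letters of the matching $f$-outcome.

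\textbf{$A$ is $(O(f),k)$-diagnosable.} I use the characterization stated just after Definition~\ref{def-obsk-diag}: it suffices to show $O(f)(\proj{\Sigma}(\faulty^{\textit{tr}}_{\geq k}(A)))\cap O(f)(\proj{\Sigma}(\nonfaulty^{\textit{tr}}(A)))=\emptyset$. Assume for contradiction there are $\rho_1\in\faulty_{\geq k}(A)$ and $\rho_2\in\nonfaulty(A)$ with the same observation $\mu=O(f)(\proj{\Sigma}(\trace(\rho_1)))=O(f)(\proj{\Sigma}(\trace(\rho_2)))$. From $\rho_1$, $\rho_2$ and $f$ I build a single play $\rho\in\out{G_A,f}$ in which Player~2 drives the first copy $A_1^k$ along $\rho_1$ and the second copy $A_2$ along $\rho_2$. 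At each Player~1 turn the set $X=f(\cdot)$ is played; Player~2 then fires the unobservable transitions (those not in $X$, including the $f$-transitions of $A_1^k$) of the two copies as silent moves, and whenever both copies are about to take the same visible letter $\ell\in X$ it performs the joint visible move. This interleaving is possible precisely because $\rho_1$ and $\rho_2$ yield the same observation $\mu$: the visible letters kept by $O(f)$ coincide, so at every synchronization point $A_1^k$ and $A_2$ agree on the action in the current $X$, enabling Player~2's joint move.

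\textbf{Main obstacle and conclusion.} The delicate step is to verify that this interleaved play is genuinely an $f$-outcome, i.e.\ that the sets $X$ encountered along $\rho$ are exactly the labels $L$ that $O(f)$ uses while producing $\mu$; this follows from the well-definedness argument above together with trace-basedness. Once this is granted the counter bookkeeping is routine: since $\rho_1$ is $k$-faulty, the fault fires in the first copy and the component $j$ of the state is updated by the rules $i=-1,\ell=f\Rightarrow j=0$ and then $j=\min(i+1,k)$, so after $k$ post-fault steps it saturates at $j=k$ and the play reaches an accepting state of $A_1^k$. Thus Player~2 wins the constructed $f$-outcome, contradicting that $f$ is a winning strategy for Player~1. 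Hence no such pair $\rho_1,\rho_2$ exists and $A$ is $(O(f),k)$-diagnosable, which together with the first part completes the proof.
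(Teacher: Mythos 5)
Your proposal is correct and follows essentially the same route as the paper's proof: the observer part (well-definedness of $L$ plus the self-loop condition) and the diagnosability part, argued by contradiction by interleaving a $k$-faulty run and a non-faulty run with equal observations into a single play of $\out{G_A,f}$ that reaches an accepting state of $A_1^k$, contradicting that $f$ is winning. If anything, your inductive argument that the Player~1 choices $X_i$ along an $f$-outcome are determined by the visible letters is slightly more careful than the paper's, which invokes trace-basedness directly even though only the $\Sigma$-projections of the two runs are assumed equal.
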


The result on a game like $G_A$ is that, if there is a winning
trace-based strategy for Player~1, then there is a most permissive
strategy $\calF_A$ which has finite memory.  It can be represented by
a finite automaton $S_{\calF_A}=(W_1 \uplus W_2,s_0,\Sigma \cup
2^\Sigma,\Delta_A)$ \st $\Delta_A \subseteq (W_1 \times 2^\Sigma
\times W_2) \cup (W_2 \times \Sigma \times W_1)$ which has size
exponential in the size of $G_A$.  For a given run $\rho \in (\Sigma
\cup 2^{\Sigma})^*$ ending in a $W_1$-state, we have
$\calF_A(w)=\enabled(\Delta_A(s_0,w))$.

\subsection{Most Permissive Observer}
\label{sec-most-permissive}

We now define the notion of a most \emph{permissive} observer and show
the existence of a most permissive observer for a system in case $A$
is diagnosable.  $\calF_A$ is the mapping defined at the end of the
previous section.

For an observer $O=(S,s_0,\Sigma,\delta,L)$ and $w \in \Sigma^*$ we
let $L(w)$ be the set $L(\delta(s_0,w))$: this is the set of events
$O$ chooses to observe on input $w$.  Given a word $\rho \in
\proj{\Sigma}(\lang(A))$, we
recall that $O(\rho)$ is the observation of $\rho$ by $O$. Assume
$O(\rho)=a_0 \cdots a_k$.  Let
$\overline{\rho}=L(\varepsilon).\varepsilon.L(a_0).a_0.\cdots
L(O(\rho)(k)).a_k$ \ie $\overline{\rho}$ contains the history of what
$O$ has chosen to observe at each step and the events that occurred
after each choice.

Let $\calO: (2^\Sigma \times \Sigma^\varepsilon )^+ \rightarrow
2^{2^\Sigma}$. By definition $\calO$ is the most permissive observer
for $(A,k)$ if the following holds:
\begin{equation*}
   \label{eq-most-perm-obs}
  \begin{array}{c}
   \textit{$O=(S,s_0,\Sigma,\delta,L)$ } \\
    \textit{is an observer and} \\
    \textit{and $A$ is $(O,k)$-diagnosable}
  \end{array}
   \iff 
   \begin{array}{c}
     \textit{$\forall w \in \Sigma^*$, } \\
     \textit{$L(\delta(s_0,w)) \in \calO(\overline{w})$}
   \end{array}
\end{equation*}
The definition of the most permissive observer states that:
\begin{itemize}
\item any good observer $O$ (one such that $A$ is
  $(O,k)$-diagnosable) must choose a set of observable events
  in $\calO(\overline{w})$ on input $w$;
\item if an observer chooses its set of observable events in
  $\calO(\overline{w})$ on input $w$, then it is a good observer.
\end{itemize}

Assume $A$ is $(\Sigma,k)$-diagnosable. Then there is an observer $O$
\st  $A$ is $(O,k)$-diagnosable because the constant observer that
observes $\Sigma$ is a solution.  By Theorem~\ref{red-1}, there is a
trace-based winning strategy for Player~1 in $G_A$. 

\begin{theorem}
  \label{thm-most-perm}
  $\calF_A$ is
the most permissive observer.
\end{theorem}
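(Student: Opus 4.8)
The plan is to obtain the defining equivalence of the most permissive observer directly from the two reductions between observers and trace-based strategies, Theorems~\ref{red-1} and~\ref{red-2}, combined with the standard characterisation of the most permissive strategy in the safety game $G_A$. The linchpin is that the encoding $\overline w$ of an input word $w$ is, after deletion of the $\varepsilon$-symbols, exactly the trace of a run of $G_A$ ending in a Player~1 state: its $2^\Sigma$-letters are the successive sets chosen by Player~1 and its $\Sigma$-letters are the visible moves of Player~2. Consequently, for an observer $O=(S,s_0,\Sigma,\delta,L)$ and the associated strategy $f(O)$, whenever a run $\rho \in \out{G_A,f(O)}$ satisfies $\proj{\Sigma}(\trace(\rho))=w$ we have both $f(O)(\rho)=L(\delta(s_0,w))$ and $\calF_A(\trace(\rho))=\calF_A(\overline w)$, the latter because $\calF_A$ is finite-memory and trace-based, so its value depends on $\rho$ only through this encoded history.

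First I would record the equivalence: \emph{$O$ is an observer for which $A$ is $(O,k)$-diagnosable if and only if $f(O)$ is a trace-based winning strategy in $G_A$}. The forward implication is exactly Theorem~\ref{red-1}. For the converse, assume $f(O)$ is winning. Theorem~\ref{red-2} gives that $O(f(O))$ is an observer for which $A$ is $(O(f(O)),k)$-diagnosable; since $O$ and $O(f(O))$ prescribe, after reading any input $w$, the same observation set $L(\delta(s_0,w))=f(O)(\rho)$, the two observers induce the same observation mapping on $\proj{\Sigma}(\lang(A))$, and diagnosability of one transfers to the other. Hence $A$ is $(O,k)$-diagnosable.

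Second I would invoke the most permissive property of $\calF_A$: a trace-based strategy $f$ is winning in the safety game $G_A$ if and only if $f(\rho)\in\calF_A(\trace(\rho))$ for every run $\rho\in\out{G_A,f}$ ending in a Player~1 state. This is the usual fact that in a safety game the most permissive strategy subsumes every winning strategy, and conversely any strategy selecting only permitted moves never leaves the winning region. Chaining the two equivalences and rewriting, via the encoding above, $f(O)(\rho)$ as $L(\delta(s_0,w))$ and $\calF_A(\trace(\rho))$ as $\calF_A(\overline w)$ with $w=\proj{\Sigma}(\trace(\rho))$, yields precisely
\[
\text{$O$ an observer with $A$ $(O,k)$-diagnosable} \iff \forall w\in\Sigma^*,\ L(\delta(s_0,w))\in\calF_A(\overline w),
\]
which is the defining condition of the most permissive observer; therefore $\calF_A$ is it.

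The main obstacle is the bookkeeping that makes ``staying within $\calF_A$'' align exactly on the game side and on the observer side. Two points require care. First, a single input $w$ is the $\Sigma$-projection of many runs $\rho$ that differ in their interleaved silent moves; I must verify that $\calF_A$ assigns them all the same set, which holds because its value is determined by $\overline w$ alone. Second, the quantifications must be reconciled: the equivalence for the most permissive observer ranges over all $w\in\Sigma^*$, whereas ``winning'' refers to runs in $\out{G_A,f(O)}$, so I would argue that the words $w$ on which the membership condition is nonvacuous are exactly those $\overline w$ realised as traces of runs consistent with $f(O)$, the remaining inputs being handled by the requirement $\delta(s,a)=s$ for $a\notin L(s)$ in Definition~\ref{def-observer2}. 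Once the correspondence $\overline w \leftrightarrow \trace(\rho)$ is pinned down precisely, these verifications are routine, but this is where the real content of the argument lies.
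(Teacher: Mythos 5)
Your proposal is correct and takes essentially the same route as the paper's own proof: both directions are obtained by chaining Theorems~\ref{red-1} and~\ref{red-2} with the standard fact that a trace-based strategy is winning iff it always selects moves inside the most permissive strategy $\calF_A$, via the identification of $\overline{w}$ with $\trace(\rho)$ for runs $\rho$ of $G_A$ ending in Player~1 states (the paper's converse likewise constructs $f(O)$, applies Theorem~\ref{red-2}, and concludes $O(f(O))=O$ to transfer diagnosability back to $O$). The bookkeeping issues you flag -- well-definedness of $\calF_A$ on all runs projecting to the same $w$, and the quantifier over all $w\in\Sigma^*$ versus runs consistent with $f(O)$, handled by the requirement $\delta(s,a)=s$ for $a\notin L(s)$ -- are precisely the points the paper's proof passes over silently, so nothing further is needed.
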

%
%
This enables us to solve Problem~\ref{prob-all-kobs} and compute a
finite representation of the set $\calO$ of all observers such that
$A$ is $(O,k)$-diagnosable iff $O \in \calO$.
Computing $\calF_A$ can be done in $O(2^{|G_A|})$. The size of $G_A$
is quadratic in $|A|$, linear in the size of $k$, and exponential in
the size of $\Sigma$ \ie $|G_A|=O(|A|^2 \times 2^{|\Sigma|} \times
|k|)$.  This means that computing $\calF_A$ can be done in exponential
time in the size of $A$ and $k$ and doubly exponential time in the
size of $\Sigma$.

The computation of a \emph{generic} diagnoser associated with the most
permissive observer can de done as well. This diagnoser is the
\emph{most permissive dynamic} diagnoser and contains all the choices
a dynamic diagnoser can make to be able to diagnose a plant.

\section{Optimal Dynamic Observers}
\label{sec-opt-pb}

In this section we define a notion of cost for observers.  This will
allow us to compare observers w.r.t. to this criterion and later on to
synthesize an optimal observer.  The notion of cost we are going to
use is inspired by \emph{weighted automata}.
  
\subsection{Weighted Automata \& Karp's Algorithm}
The notion of cost for automata has already been defined and
algorithms to compute some optimal values related to this model are
described in many papers. We recall here the results of~\cite{karp-78}
which will be used later.
\begin{definition}[Weighted Automaton] 
  A \emph{wei\-ghted au\-to\-ma\-ton} is a pair $(A,w)$ s.t.
  $A=(Q,q_0,\Sigma,\delta)$ is a finite automaton and $w: Q
  \rightarrow \setN$ associates a weight with each state.  \endef
\end{definition}
\begin{definition}[Mean Cost]
  Let $\rho=q_0\xrightarrow{a_2}q_1\xrightarrow{a_1} \cdots
  \xrightarrow{a_{n}}q_n$ be a run of $A$. The \emph{mean cost} of
  $\rho$ is 
\[
\mu(\rho)=\frac{1}{n+1} \times \sum_{i=0}^n
  w(q_i) \, \mathpunct. 
\] \endef 
\end{definition}
We remind that the length of
$\rho=q_0\xrightarrow{a_1}q_1\xrightarrow{a_2} \cdots
\xrightarrow{a_{n}}q_n $ is $|\rho|=n$.  We assume that $A$ is
complete w.r.t. $\Sigma$ (and $\Sigma \neq \emptyset$) and thus
contains at least one run for any arbitrary length $n$.  Let
$\runs^n(A)$ be the set of runs of length $n$ in $\runs(A)$.
The \emph{maximum mean-weight} of the runs of length $n$ for $A$ is
$\nu(A,n)=\max \{ \mu(\rho) \textit{ for } \rho \in \runs^n(A) \}$.
The \emph{maximum mean weight} of $A$ is $\nu(A)=\lim \sup_{n
  \rightarrow \infty}\nu(A,n)$.  Actually the value $\nu(A)$ can be
computed using Karp's maximum mean-weight cycle
algorithm~\cite{karp-78} on weighted graphs.  If $c=s_0
\xrightarrow{a_1}s_1\xrightarrow{a_2} \cdots \xrightarrow{a_{n}}s_n$
is a cycle of $A$ \ie $s_0=s_n$, the \emph{mean weight} of the cycle
$c$ is $\mu(c)=\frac{1}{n+1} \cdot \sum_{i=0}^n w(s_i)$.  The
\emph{maximum mean-weight cycle} of $A$ is the value $\nu^*(A) = \max
\{ \mu(c) \textit{ for $c$ a cycle of $A$} \}$.  As stated
in~\cite{zwick-95}, for weighted automata, the mean-weight cycle value
is the value that determines the mean-weight value: $\nu(A)=\lim
\sup_{n \rightarrow \infty}\nu(A,n)=\lim_{n \rightarrow
  \infty}\nu(A,n)=\nu^*(A)$.

\subsection{Cost of a Dynamic Observer}

\label{sec-cost-obs}
Let $\obs=(S,s_0,\Sigma,\delta,L)$ be an observer and
$A=(Q,$ $q_0,\Sigma^{\varepsilon,f},\rightarrow)$.
We would like to define a notion of \emph{cost} for observers in order
to select an optimal one among all of those which are valid, \ie
s.t. $A$ is $(\obs,k)$-diagnosable.  Intuitively this notion of cost
should imply that the more events we observe at each time,
the more expensive it is.


There is not one way of defining a notion of cost and the reader is
referred to~\cite{cassez-tase-07} for a discussion on this subject.

The cost of a word $w$ is given by:
\[ \cost(w)=\frac{\sum_{i=0}^{i=n}|L(\delta(s_0,w(i)))|}{n+1}\] with
$n=|w|$.  
We now show how to define and compute the cost of an observer $\obs$
that observes a DES $A$.

Given a run $\rho \in \runs(A)$, the observer only processes
$\proj{\Sigma}(\trace(\rho))$ ($\varepsilon$ and $f$-transitions are
not processed). To have a consistent notion of costs that takes into
account the logical time elapsed from the beginning, we need to take
into account one way or another the number of \emph{steps} of $\rho$
(the length of $\rho$) even if some of them are non observable.  A
simple way to do this is to consider that $\varepsilon$ and $f$ are
now observable events, let's say $u$, but that the observer never
chooses to observe them. Indeed we assume we have already checked that
$A$ is $(\obs,k)$-diagnosable, and the problem is now to compute the
cost of the observer we have used.

\begin{definition}[Cost of a Run]\label{def-cost-run}
  Given a run $\rho=q_0 \xrightarrow{\ a_1\ } q_1 \xrightarrow{\ a_2\
  } \cdots q_{n-1} \xrightarrow{\ a_n\ } q_n \in \runs(A)$, let $w_i =
  \obs(\proj{\Sigma}(\trace(\rho(i)))), 0 \leq i \leq n$.  The
  \emph{cost} of $\rho \in \runs(A)$ is defined by:
\[
\cost(\rho,A,\obs)=\frac{1}{n+1} \cdot \sum_{i=0}^n |L(\delta(s_0,w_i)|
\mathpunct.
\] \endef
\end{definition}
We recall that $\runs^n(A)$ is the set of runs of length $n$ in
$\runs(A)$.  The cost of the runs of length $n$ of $A$ is
\[
\cost(n,A,\obs)=\max \{ \cost(\rho,A,\obs) \textit{ for } \rho
\in \runs^n(A) \} \mathpunct .
\]
  The cost of the pair $(\obs,A)$ is
\[
\cost(A,\obs)=\limsup_{n \rightarrow \infty} \cost(n,A,\rho) \mathpunct .
\]
Notice that $\cost(n,A,\obs)$ is defined for each $n$ because we
have assumed $A$ generates runs of arbitrary large length.

As emphasised previously, in order to compute $\cost(n,A,\obs)$ we
consider that $\varepsilon$ and $f$ are now observable events, say
$u$, but that the observer never chooses to observe them.
Let $\obs^+=(S,s_0,\Sigma^u,\delta',L)$ where $\delta'$ is $\delta$
augmented with $u$-transitions that loop on each state $s \in S$.  Let
$A^+$ be $A$ where $\varepsilon$ and $f$ transitions are renamed $u$.
Let $A^+ \times \obs^+$ be the synchronized product of $A^+$ and
$\obs^+$.  $A^+ \times \obs^+=(Z,z_0,\Sigma^u,\Delta)$ is complete
w.r.t. $\Sigma^u$ and we let $w(q,s)=|L(s)|$ so that $(A^+ \times
\obs^+,w)$ is a weighted automaton.

\begin{theorem}
  $\cost(A,\obs)=\nu^*(A^+ \times \obs^+)$.
\end{theorem}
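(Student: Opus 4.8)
The plan is to reduce the claim to the mean-weight results recalled just above. Concretely, I would first prove the intermediate identity $\cost(A,\obs)=\nu(A^+ \times \obs^+)$, where $\nu$ denotes the maximum mean weight of the weighted automaton $(A^+\times\obs^+,w)$; the theorem then follows at once from the equality $\nu(A^+\times\obs^+)=\nu^*(A^+\times\obs^+)$ established by Zwick and Paterson for weighted automata, which is quoted in the excerpt.

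To prove the intermediate identity it suffices to show that $\cost(n,A,\obs)=\nu(A^+\times\obs^+,n)$ for every $n$, since both $\cost(A,\obs)$ and $\nu(A^+\times\obs^+)$ are obtained as $\limsup_{n\to\infty}$ of these quantities. For this I would exhibit a mean-cost-preserving correspondence between $\runs^n(A)$ and $\runs^n(A^+\times\obs^+)$. Given $\rho=q_0\xrightarrow{a_1}q_1\cdots\xrightarrow{a_n}q_n\in\runs^n(A)$, the product $A^+\times\obs^+$ determines a unique run $\tilde\rho=(q_0,s_0)\to(q_1,s_1)\to\cdots\to(q_n,s_n)$: on a step labelled $\varepsilon$ or $f$ (renamed $u$ in $A^+$) the observer self-loops, so $s_i=s_{i-1}$, while on a step labelled $\lambda\in\Sigma$ the observer moves to $s_i=\delta(s_{i-1},\lambda)$. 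Conversely, every run of $A^+\times\obs^+$ of length $n$ arises this way, the only freedom being whether each $u$-step came from $\varepsilon$ or from $f$; since this choice affects neither the observer component nor the weights, the two maxima defining $\cost(n,A,\obs)$ and $\nu(A^+\times\obs^+,n)$ range over the same multiset of mean costs.

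The heart of the argument is the identity $s_i=\delta(s_0,w_i)$, where $w_i=\obs(\proj{\Sigma}(\trace(\rho(i))))$, since it gives $w(q_i,s_i)=|L(s_i)|=|L(\delta(s_0,w_i))|$ and hence $\mu(\tilde\rho)=\cost(\rho,A,\obs)$. This rests on two observations. First, by construction of the product the observer component advances exactly on the $\Sigma$-labels of $\rho$, so $s_i=\delta(s_0,u'_i)$ where $u'_i=\proj{\Sigma}(\trace(\rho(i)))$ is the sequence of observable labels read so far. Second, the observer is idempotent in the sense that $\delta(s_0,\obs(u'))=\delta(s_0,u')$ for every $u'\in\Sigma^*$: the unobserved letters of $u'$ are exactly the ones on which $\delta$ self-loops (since $a\notin L(s)\Rightarrow\delta(s,a)=s$) and which $\obs$ erases from its output, so feeding the observer its own output $\obs(u')$ drives it through the same sequence of state changes as feeding it $u'$, as a short induction on the positions of the observed letters shows. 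Combining the two yields $s_i=\delta(s_0,u'_i)=\delta(s_0,\obs(u'_i))=\delta(s_0,w_i)$.

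With the correspondence in hand, $\cost(n,A,\obs)=\max_{\rho\in\runs^n(A)}\cost(\rho,A,\obs)=\max_{\sigma\in\runs^n(A^+\times\obs^+)}\mu(\sigma)=\nu(A^+\times\obs^+,n)$, and taking $\limsup$ over $n$ gives $\cost(A,\obs)=\nu(A^+\times\obs^+)=\nu^*(A^+\times\obs^+)$. I expect the main obstacle to be the idempotence lemma $\delta(s_0,\obs(u'))=\delta(s_0,u')$ together with its clean interaction with the renaming of $\varepsilon$ and $f$ to $u$: one must check carefully that the $u$-self-loops added to $\obs^+$ make the product accumulate one weight $|L(s_i)|$ per step of $\rho$ (including the silent ones), so that the normalisation by $n+1$ matches in both $\cost(\rho,A,\obs)$ and $\mu(\tilde\rho)$.
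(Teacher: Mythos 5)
Your proof is correct and takes essentially the same route as the paper's: a mean-cost-preserving correspondence between runs of $A$ and runs of $A^+\times\obs^+$ (obtained by relabelling $\varepsilon$ and $f$ transitions to $u$), followed by the quoted equality $\nu(A^+\times\obs^+)=\nu^*(A^+\times\obs^+)$. The only difference is level of detail: you make explicit the idempotence lemma $\delta(s_0,\obs(u'))=\delta(s_0,u')$, which the paper's brief argument leaves implicit but which is indeed the point where the definition of $\cost(\rho,A,\obs)$ via the observer's \emph{output} $w_i$ is reconciled with the product automaton reading the raw projected trace.
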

\noindent Thus we can compute the cost of a given pair $(A,\obs)$:
this can be done using Karp's maximum mean weight cycle
algorithm~\cite{karp-78} on weighted graphs.  This algorithm is
polynomial in the size of the weighted graph and thus:
\begin{theorem}
  Computing the cost of $(A,\obs)$ is in P.
\end{theorem}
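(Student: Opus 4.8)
The plan is to invoke the preceding theorem directly and then bound the running time of Karp's algorithm on the resulting weighted graph. Since we have just established $\cost(A,\obs)=\nu^*(A^+ \times \obs^+)$, computing the cost reduces to computing the maximum mean-weight cycle value $\nu^*$ of the weighted automaton $(A^+ \times \obs^+,w)$. The value $\nu^*$ is computable by Karp's algorithm~\cite{karp-78}, which runs in time $O(|Z| \cdot |\Delta|)$ on a weighted graph $(Z,z_0,\Sigma^u,\Delta)$. Thus it suffices to show that $A^+ \times \obs^+$ has size polynomial in the input size $|A|+|\obs|$, and that the weight function is polynomial-time computable.

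First I would bound the number of states. The state set of $A^+ \times \obs^+$ is $Z \subseteq Q \times S$, so $|Z| \leq |Q| \cdot |S|$, which is polynomial in $|A|+|\obs|$. The transformations used to build $A^+$ and $\obs^+$ do not enlarge the state space: renaming $\varepsilon$ and $f$ into the fresh symbol $u$ leaves the states of $A$ untouched, adding the $u$-self-loops to $\obs^+$ adds no states, and the synchronized product $\times$ (Section~\ref{sec-synchro}) introduces only pairs. Next I would bound $|\Delta|$: every product transition synchronizes a transition of $A^+$ with a (possibly self-loop) transition of $\obs^+$ on a common label, and since $\obs$ is deterministic and the $u$-transitions are self-loops, the number of product transitions stays proportional to the number of transitions of $A^+$ times a factor bounded by $|S|$, hence polynomial. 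The weight $w(q,s)=|L(s)|$ is computable in polynomial time at each state.

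Combining these bounds, Karp's algorithm is run on a graph of polynomial size and is itself polynomial in that size, so the whole computation of $\nu^*(A^+ \times \obs^+)=\cost(A,\obs)$ runs in polynomial time, which places the problem in P. I do not expect a genuine obstacle here: the proof is essentially a size accounting argument riding on the previous theorem. The only point that requires care is confirming that the three construction steps (renaming, self-loop augmentation, and the product) each keep both the state count and the transition count polynomial, so that the input handed to Karp's algorithm is indeed of polynomial size.
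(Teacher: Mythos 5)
Your proposal is correct and follows essentially the same route as the paper: the paper also derives this theorem directly from the preceding identity $\cost(A,\obs)=\nu^*(A^+ \times \obs^+)$ and the fact that Karp's maximum mean-weight cycle algorithm is polynomial in the size of the weighted graph. Your explicit size accounting (states bounded by $|Q|\cdot|S|$, transitions polynomial since $\obs$ is deterministic and the constructions of $A^+$ and $\obs^+$ add no states) just spells out what the paper leaves implicit.
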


\begin{remark}
  Notice that instead of the values $|L(s)|$ we could use any mapping
  from states of Obs to $\setZ$ and consider that the cost of
  observing $\{a,b\}$ is less than observing $a$.
\end{remark}

\subsection{Optimal Dynamic Diagnosers}
\label{sec-optimal}
In this section, we focus on the problem of computing a best observer
in the sense that diagnosing the DES with it has minimal cost.  We
address the following problem:
\begin{prob}[Bounded Cost Observer] \label{prob-bounded-cost} \mbox{} \\
  \textsc{Input:} $A$, $k \in \setN$ and $c \in \setN$.\\
  \textsc{Problem:}
    \begin{enumerate}[(A).]
    \item Is there an observer \obs s.t. $A$ is
      (\obs,k)-diagnosable and $\cost(\obs) \leq c$ ?
    \item If the answer to (A) is ``yes'', compute a witness optimal
      observer \obs with $\cost(\obs) \leq c$.
    \end{enumerate}
 \end{prob}  

%
 Theorem~\ref{thm-most-perm}, page~~\pageref{thm-most-perm}
 establishes that there is a most permissive observer $\calF_A$ in
 case $A$ is $(\Sigma,k)$-diagnosable and it can be computed in
 exponential time in the size of $A$ and $k$, doubly exponential time
 in $|\Sigma|$, and has size exponential in $A$ and $k$, and doubly
 exponential in $|\Sigma|$.  Moreover the most permissive observer
 $\calF_A$ can be represented by a finite state machine
 $S_{\calF_A}=(\{0,2\cdots,l\} \cup (\{1,3,\cdots,2l'+1\}\times
 2^\Sigma),0,\Sigma \cup 2^\Sigma,\delta)$ which has the following
 properties:
\begin{itemize}
\item even states are states where the observer chooses a set of events
  to observe;
\item odd states $(2i+1,X)$ are states where the observer waits for 
  an observable event in $X$ to occur;
\item if $\delta(2i,X)=(2i'+1,X)$ with $X \in 2^\Sigma$, it means that
  from an even state $2i$, the automaton $S_{\calF_A}$ can select a
  set $X$ of events to observe. The successor state is an odd state
  together with the set $X$ of events that are being observed;
\item if $\delta((2i+1,X),a)=2i'$ with $a \in X$, it means that from
  $(2i+1,X)$, $S_{\calF_A}$ is waiting for an observable event to
  occur. When some occurs it switches to an even state.
\end{itemize}
By definition of $\calF_A$, any observer $O$ s.t. $A$ is
$(O,k)$-diagnosable must select a set of observable events in
$\calF_A(\trace(\overline{w}))$ after having observed $w \in
\proj{\Sigma}(\lang(A))$.

To compute an optimal observer, we use a result by Zwick and
Paterson~\cite{zwick-95} on \emph{weighted graph games}. 

%
%
\begin{definition}[Weighted Graph] \label{def-w-g} A \emph{wei\-ghted
    directed graph} is a pair $(G,w)$ s.t.  $G=(V,E)$ is a directed
  graph and $w: E \rightarrow \{-W,\cdots,0,\cdots,W\}$ assigns an
  integral weight to each edge of $G$ with $W \in \setN$.  We assume
  that each vertex $v \in V$ is reachable from a unique \emph{source}
  vertex $v_0$ and has at least one outgoing transition.
 \endef
\end{definition}

\begin{definition}[Weighted Graph Game]
  A \emph{weighted graph game} $G=(V,E)$ is a bipartite weighted graph
  with $V=V_1 \cup V_2$ and $E=E_1 \cup E_2$, $E_1 \subseteq V_1
  \times V_2$ and $E_2 \subseteq E_2 \times E_1$. We assume the
  initial vertex $v_0$ of $G$ belongs to $V_1$. \endef
\end{definition}
Vertices $V_i$ are Player~i's vertex. A weighted graph game is a turn
based game in which the turn alternates between Player~1 and Player~2.
The game starts at a vertex $v_0 \in V_1$. Player~1 chooses an edge
$e_1=(v_0,v_1)$ and then Player~2 chooses an edge $e_2=(v_1,v_2)$ and
so on and they build an infinite sequence of edges.
Player~1 wants to maximise $\liminf_{n \rightarrow \infty} \frac{1}{n}
\cdot \sum_{i=1}^{n}w(e_i)$ and Player~2 wants to minimize $\limsup_{n
  \rightarrow \infty} \frac{1}{n} \cdot \sum_{i=1}^{n}w(e_i)$.

One of the result of~\cite{zwick-95} is that there is a rational value
$\nu \in \setQ$ s.t. Player~1 has a strategy to ensure $\liminf_{n
  \rightarrow \infty} \frac{1}{n} \cdot \sum_{i=1}^{n}w(e_i) \geq \nu$
and Player~2 has a strategy to ensure that $\limsup_{n \rightarrow
  \infty} \frac{1}{n} \cdot \sum_{i=1}^{n}w(e_i) \leq \nu$. $\nu$ is
called the value of the game. 
In summary the results by Zwick and Paterson~\cite{zwick-95} we are
going to use are:
 \begin{itemize}
\item there is a value $\nu \in \setQ$, called the \emph{value of the
    game} s.t. 
  Player~1 has a strategy to ensure that $\liminf_{n \rightarrow
    \infty} \frac{1}{n}\sum_{i=1}^n w(e_i) \geq \nu$ and Player~2 has
  a strategy to ensure that $\limsup_{n \rightarrow \infty}
  \frac{1}{n}\sum_{i=1}^n w(e_i) \leq \nu$; this value can be computed
  in $O(|V|^3 \times |E| \times W)$ where $W$ is the range of the
  weight function (assuming the weights are in the interval
  $[-W..W]$). Note that deciding whether this value satisfies $\nu
  \bowtie c$ for $\bowtie \in \{=,<,>\}$ for $c \in \setQ$ can be done
  in $O(|V|^2 \times |E| \times W)$.
\item there are optimal memoryless strategies for both players that
  can be computed in $O(|V|^4 \times |E| \times \log(|E|/|V|) \times
  W)$.
\end{itemize}

To solve Problem~\ref{prob-bounded-cost}, we use the most permissive
observer $\calF_A$ we computed in section~\ref{sec-most-permissive}.
Given $A$ and $\calF_A$, we build a weighted graph game $G(A,\calF_A)$
s.t.  the value of the game is the optimal cost for the set of all
observers.  Moreover an optimal observer can be obtained by taking an
optimal memoryless strategy in $G(A,\calF_A)$.

To build $G(A,\calF_A)$ we use the same idea as in
section~\ref{sec-cost-obs}: we replace $\varepsilon$ and $f$
transitions in $A$ by $u$ obtaining $A^+$.  We also modify $\calF_A$
to obtain a weighted graph game $(\calF_A^+,w)$ by adding transitions
so that each state $2k+1$ is complete w.r.t. $\Sigma^u$. This is done
as follows:
\begin{itemize}
\item from each $(2i+1,X)$ state, create a new even state \ie pick
  some $2i'$ that has not already been used. Add transitions
  $((2i+1,X),\sigma,2i')$ for each $\sigma \in \Sigma^u \setminus
  \enabled(2i+1,X)$. Add also a transition $(2i',X,(2i+1,X))$. This
  step means that if a $A$ produces an event and it is not observable,
  $\calF_A^+$ just reads the event and makes the same choice again.
\item the weight of a transition $(2i,X,(2i'+1,X))$ is $|X|$.
\end{itemize}
The game $G(A,\calF_A)$ is then $A^+ \times \calF_A^+$.  
This way we can obtain a weighted graph game $WG(A,\calF_A)$ by
abstracting away the labels of the transitions.  Notice that it still
enables us to convert any strategy in $WG(A,\calF_A)$ to a strategy in
$\calF_A$. A strategy in $WG(A,\calF_A)$ will define an edge
$(2i,(2i'+1,X))$ to take. As the target vertex contains the set of
events we chose to observe we can define a corresponding strategy in
$\calF_A$.

By construction of $G(A,\calF_A)$ and the definition of the value of a
weighted graph game, the value of the game is the optimal cost for the
set of all observers $O$ s.t. $A$ is $(O,k)$-diagnosable.  

Assume $A$ has $n$ states and $m$ transitions.  From
Theorem~\ref{thm-most-perm} we know that $\calF_A$ has at most
$O(2^{n^2} \times 2^k \times 2^{2^{|\Sigma|}})$ states and $O(2^{n^2}
\times 2^k \times 2^{2^{|\Sigma|}} \times n^2 \times k \times m)$
transitions.  Hence $G(A,\calF_A)$ has at most $O( n \times 2^{n^2}
\times 2^k \times 2^{2^{|\Sigma|}})$ vertices and $O(m \times 2^{n^2}
\times 2^k \times 2^{2^{|\Sigma|}})$ edges.  To make the game complete
we may add at most half the number of states and hence $WG(A,\calF_A)$
has the same size.  We thus obtain the following results:
\begin{theorem}
  Problem~\ref{prob-bounded-cost} can be solved in time $O(|\Sigma|
  \times m \times 2^{n^2} \times 2^k \times 2^{2^{|\Sigma|}})$.
\end{theorem}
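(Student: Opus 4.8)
The plan is to reduce Problem~\ref{prob-bounded-cost} to computing the value of a mean-payoff weighted graph game and then to invoke the algorithm of Zwick and Paterson~\cite{zwick-95}. First I would dispose of the trivial case: if $A$ is not $(\Sigma,k)$-diagnosable then, by the observation following Problem~\ref{prob-all-obs}, no observer can work and the answer to~(A) is ``no''. Otherwise Theorem~\ref{thm-most-perm} guarantees that the most permissive observer $\calF_A$ exists and that its finite-state representation $S_{\calF_A}$ can be computed within the bounds recalled at the start of this section, with even states recording the observer's choice of an observable set $X$ and odd states $(2i+1,X)$ waiting for an event of $X$. Since $\calF_A$ captures \emph{exactly} the admissible observers, it is the right object on which to optimise cost.

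Next I would build the weighted graph game precisely as sketched above: rename the $\varepsilon$- and $f$-transitions of $A$ to a fresh symbol $u$ to obtain $A^+$, so that the logical time of every step (observable or not) is counted; complete each odd state of $\calF_A$ with respect to $\Sigma^u$ to obtain $\calF_A^+$; and put weight $|X|$ on every edge $(2i,X,(2i'+1,X))$ on which the observer commits to watching the set $X$. Forming the product $A^+ \times \calF_A^+$ and forgetting the transition labels yields the bipartite weighted graph game $WG(A,\calF_A)$, whose two vertex classes are the observer-choice states and the event-waiting states.

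The crux is to prove that the value $\nu$ of $WG(A,\calF_A)$ equals the least cost achievable by any valid observer, i.e. $\nu = \min\{\cost(A,\obs) \mid A \text{ is } (\obs,k)\text{-diagnosable}\}$. For this I would exhibit a correspondence between positional strategies of the observer-player (which, by the defining property of $\calF_A$, range exactly over the admissible observers) and the observers $\obs$ with $A$ $(\obs,k)$-diagnosable, mirroring Theorems~\ref{red-1} and~\ref{red-2}. Along any play the mean of the accumulated edge weights is precisely the mean of $|L(\cdot)|$ over the steps of the corresponding run of $A^+$, so the mean-payoff objective coincides with $\cost$; as in the cost theorem of Section~\ref{sec-cost-obs}, the $\limsup$ over runs is realised on a cycle and hence equals the game value. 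The main difficulty I anticipate is aligning the two optimisation directions: the observer minimises while the adversary producing runs of $A$ maximises the worst case, so I must cast the environment as Player~1 (maximiser of the $\liminf$) and the observer as Player~2 (minimiser of the $\limsup$) and appeal to the determinacy and the equality of the two values established in~\cite{zwick-95}. Reconciling the $\limsup$/$\liminf$ discrepancy with the $\max$-over-runs definition of $\cost(n,A,\obs)$ cleanly is the step I expect to require the most care.

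Finally I would read off the complexity. Substituting the size of $\calF_A$ from Theorem~\ref{thm-most-perm} into the product, $WG(A,\calF_A)$ has $O(n \times 2^{n^2} \times 2^k \times 2^{2^{|\Sigma|}})$ vertices, $O(m \times 2^{n^2} \times 2^k \times 2^{2^{|\Sigma|}})$ edges, and weights in $[0..|\Sigma|]$, so $W=|\Sigma|$. For part~(A) it suffices to decide $\nu \le c$, which by~\cite{zwick-95} costs $O(|V|^2 \times |E| \times W)$; collecting the dominant doubly-exponential factor $2^{2^{|\Sigma|}}$ together with the $|\Sigma| \times m$ and $2^{n^2} \times 2^k$ terms gives a bound of the stated form, the doubly-exponential term absorbing the polynomial overhead. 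For part~(B) a witness observer is obtained from an optimal memoryless strategy of Player~2, computable in $O(|V|^4 \times |E| \times \log(|E|/|V|) \times W)$ by the same result, and translated back into an observer through the vertex labels $X$, which stays within the same asymptotic class. This yields the announced bound $O(|\Sigma| \times m \times 2^{n^2} \times 2^k \times 2^{2^{|\Sigma|}})$.
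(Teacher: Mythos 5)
Your proposal follows essentially the same route as the paper: compute the most permissive observer $\calF_A$ via Theorem~\ref{thm-most-perm}, form $A^+$ and $\calF_A^+$, build the weighted graph game $WG(A,\calF_A)$ with weight $|X|$ on the observer-choice edges, identify the game value with the optimal cost over all valid observers, and invoke Zwick and Paterson's mean-payoff algorithms for the decision and for extracting a memoryless witness strategy, with the same size bounds plugged in. Your added care about the degenerate non-diagnosable case and about which player minimises (the observer) versus maximises (the plant) merely makes explicit what the paper leaves implicit, and your complexity accounting is loose in exactly the same way as the paper's (treating the exponential factors as absorbing constant powers), so there is no substantive difference.
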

We can even solve the optimal cost computation problem:
\begin{prob}[Optimal Cost Observer] \label{prob-optimal-cost} \mbox{} \\
  \textsc{Input:} $A$, $k \in \setN$.\\
  \textsc{Problem:} Compute the least value $m$ s.t. there exists an
  observer \obs s.t. $A$ is (\obs,k)-diagnosable and $\cost(\obs)
  \leq m$.
 \end{prob}  
\begin{theorem}\label{thm-optimal-cost}
  Problem~\ref{prob-optimal-cost} can be solved in time $O(|\Sigma|
  \times m \times 2^{n^2} \times 2^k \times 2^{2^{|\Sigma|}})$.
 \end{theorem}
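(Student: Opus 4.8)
The plan is to observe that Problem~\ref{prob-optimal-cost} asks for nothing more than the \emph{value} of the weighted graph game $WG(A,\calF_A)$ constructed just above, so that the only difference with Problem~\ref{prob-bounded-cost} is that we run the value-\emph{computation} part of Zwick and Paterson's algorithm~\cite{zwick-95} instead of the threshold-\emph{decision} part. Indeed, by construction of $G(A,\calF_A)=A^+\times\calF_A^+$ and the definition of the value of a weighted graph game, its value $\nu$ is the least mean cost achievable against the worst-case behaviour of the plant, taken over all observers $O$ such that $A$ is $(O,k)$-diagnosable (the observer being the minimizing player, the plant the maximizing one). Hence the least $m$ required by Problem~\ref{prob-optimal-cost} is exactly $\nu$, and it suffices to compute $\nu$.

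Concretely I would proceed in three steps. First, compute the most permissive observer $\calF_A$ via Theorem~\ref{thm-most-perm}; this step is the source of the doubly exponential dependence on $|\Sigma|$. Second, assemble the weighted graph game exactly as in the construction above: rename the $\varepsilon$ and $f$ transitions of $A$ to $u$ to get $A^+$, complete $\calF_A$ into $(\calF_A^+,w)$ by adding the $u$-loops and fresh even states and weighting each observation-choice edge $(2i,X,(2i'+1,X))$ by $|X|$, form the product $A^+\times\calF_A^+$, and abstract away the labels to obtain $WG(A,\calF_A)$. Third, invoke the value-computation procedure of~\cite{zwick-95} on $WG(A,\calF_A)$; since each weight $|X|$ lies in $[0,|\Sigma|]$ we have $W=|\Sigma|$, and the procedure returns $\nu\in\setQ$, which we output as $m$. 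If a witness observer is also desired, I would additionally extract an optimal memoryless strategy for the minimizing (observer) player, which exists and is computable within the same asymptotic bound by~\cite{zwick-95}; reading off, at each even state $2i$, the edge $(2i,X,(2i'+1,X))$ chosen by the strategy yields the set $X$ to watch, i.e.\ a finite-state optimal observer.

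For the complexity I would substitute the sizes already recorded for $\calF_A$ and the product: $WG(A,\calF_A)$ has $O(n\times 2^{n^2}\times 2^k\times 2^{2^{|\Sigma|}})$ vertices and $O(m\times 2^{n^2}\times 2^k\times 2^{2^{|\Sigma|}})$ edges, with $W=|\Sigma|$. The Zwick--Paterson value computation is polynomial in $|V|$, $|E|$ and $W$, so the whole procedure remains within the doubly exponential bound dominated by the size of $\calF_A$; absorbing the polynomial factors into that dominant term gives the announced running time $O(|\Sigma|\times m\times 2^{n^2}\times 2^k\times 2^{2^{|\Sigma|}})$, which is the same bound as for Problem~\ref{prob-bounded-cost} precisely because both problems are dominated by the construction of the most permissive observer.

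The step I expect to be the crux is justifying that the value of $WG(A,\calF_A)$ really equals the optimal cost over \emph{all} valid observers, not merely over the game's strategies. This rests on two ingredients established earlier: (i) the characterization underlying Theorem~\ref{thm-most-perm}, namely that an observer $O$ makes $A$ $(O,k)$-diagnosable iff on every input it selects its observation set within $\calF_A$, so that ranging over strategies of the observer player in $\calF_A^+$ coincides with ranging over all valid observers; and (ii) the cost characterization $\cost(A,O)=\nu^*(A^+\times O^+)$, which guarantees that the mean weight of a play of the game equals the cost of the corresponding run of $A$ under the chosen observer. Combining (i) and (ii), the minimum over observers of the maximum over runs of $\runs^n(A)$ of the mean cost is exactly the min--max value of the weighted graph game, and the existence of memoryless optimal strategies in~\cite{zwick-95} ensures this optimum is attained by an actual finite-state observer, so the computed $\nu$ is realizable rather than merely an infimum.
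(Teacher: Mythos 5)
Your proposal follows exactly the paper's route: you reuse the weighted graph game $WG(A,\calF_A)$ built from $A^+\times\calF_A^+$, justify that its value equals the optimal cost over all valid observers via Theorem~\ref{thm-most-perm} together with the characterization $\cost(A,\obs)=\nu^*(A^+\times\obs^+)$, and then invoke Zwick and Paterson's value-computation (rather than threshold-decision) procedure with $W=|\Sigma|$, substituting the already-computed sizes of $\calF_A$ and the product to land on the stated bound --- which is precisely how the paper derives Theorem~\ref{thm-optimal-cost} from the construction preceding it. The only differences are cosmetic: you make explicit the min--max roles of the observer and plant and the extraction of a memoryless optimal strategy as a witness finite-state observer, both of which the paper leaves implicit.
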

 A consequence of Theorem~\ref{thm-optimal-cost} and Zwick and
 Paterson's results is that the cost of the optimal observer is a
 rational number.

\section{Conclusions}
\label{sec_ccl}

In this paper we have addressed sensor minimization problems in the
context of fault diagnosis, using dynamic
observers.
We proved that, for an observer given by a finite automaton,
diagnosability can be checked in polynomial time (as in the case of
static observers).  We also solved \stavros{a} synthesis problem of
dynamic observers and showed that a most-permissive dynamic observer
can be computed in doubly-exponential time, \stavros{provided an upper
  bound on the delay needed to detect a fault is given}.  Finally we
have \stavros{ defined a notion of cost for dynamic obervers and shown
  how to compute the minimal-cost observer that can be used to detect
  faults within a given delay.  }


There are several directions we are currently investigating.

Problem~\ref{prob-all-obs} has not been solved so far.  The major
impediment to solve it is that the reduction we propose in
section~\ref{sec-dyn-obs} yields a Büchi game in this case. 
More generally we \stavros{plan to} extend the framework we have
introduced for fault diagnosis to control under dynamic partial
observation and this will enable \stavros{us} to solve
Problem~\ref{prob-all-obs}.

Problem~\ref{prob-all-kobs} is solved in doubly exponential
time. Nevertheless to reduce the number of states of the most
permissive observer, we point out that only \emph{minimal} sets of
events \stavros{ need to be observed}.  Indeed, if we can diagnose a
system by observing only $\Sigma$ from some point on, we surely can
diagnose it using any superset $\Sigma' \supseteq \Sigma$. So far we
keep all the sets that can be used to diagnose the system. We could
possibly take advantage of the previous property using techniques
described in~\cite{doyen-csl-06}.

\bibliographystyle{IEEEtran} 
\bibliography{../biblio,../diagnosis}

\end{document}